\documentclass[11pt]{article}
\usepackage{lmodern}
\usepackage{dsfont}
\usepackage{environ}
\linespread{1.06}

\input{definitions.sty}

\newif\ifshowcomments
\showcommentstrue

\ifshowcomments
\newcommand{\jason}[1]{\textcolor{red}{\textbf{Jason: }#1}}
\newcommand{\joe}[1]{\textcolor[HTML]{4169E1}{\textbf{Joe: }#1}}
\newcommand{\sdk}[1]{\textcolor[HTML]{FF00FF}{\textbf{Scott: }#1}}
\newcommand{\jens}[1]{\textcolor[HTML]{FF69B4}{\textbf{Jens: }#1}}
\newcommand{\tim}[1]{\textcolor[HTML]{FF4500}{\textbf{Tim: }#1}}
\else
\newcommand{\jason}[1]{}
\newcommand{\joe}[1]{}
\newcommand{\sdk}[1]{}
\newcommand{\jens}[1]{}
\newcommand{\tim}[1]{}
\fi

\newcommand{\defeq}{\triangleq}

\newcommand{\R}{{\mathbb{R}}}

\renewcommand{\Pr}{\mathop{\bf Pr\/}}
\newcommand{\E}{\mathop{\mathbb{E}\/}}  %

\newcommand{\dist}{\textnormal{dist}}
\newcommand{\conv}{{\textnormal{Conv}}}

\newcommand{\eps}{\epsilon}

\newcommand{\calC}{\mathcal{C}}
\newcommand{\calD}{\mathcal{D}}

\newcommand{\calX}{\mathcal{X}}

\newcommand{\bone}{\boldsymbol{1}}

\def\<{\langle}
\def\>{\rangle}

\def\vec{\bm}

\newcommand{\emailhref}[1]{\href{mailto:#1}{\tt #1}}

\begin{document}

\title{Incentive-Compatible Recovery from Manipulated Signals,\\ with Applications to Decentralized Physical Infrastructure\thanks{
Milionis's research is supported in part by NSF awards CNS-2212745, CCF-2332922, CCF-2212233, DMS-2134059, and CCF-1763970, by an Onassis Foundation Scholarship, and an A.G. Leventis educational grant.
Kominers gratefully acknowledges support from the Digital Data Design $\text{D}^3$ Institute at Harvard and the Ng Fund and the Mathematics in Economics Research Fund of the Harvard Center of Mathematical Sciences and Applications.
Roughgarden's research at Columbia University is supported in part by NSF awards CCF-2006737 and CNS-2212745.
Bonneau, Kominers, and Roughgarden hold positions at a16z crypto (for general a16z disclosures, see https://www.a16z.com/disclosures/). Milionis and Ernstberger performed work in part during an internship at a16z crypto. Notwithstanding, the ideas and opinions expressed herein are those of the authors, rather than of a16z or its affiliates. Kominers and Roughgarden also advise companies on marketplace and incentive design. 
}}

\author{
 	    \textbf{Jason Milionis}\thanks{Columbia University. Email: \emailhref{jm@cs.columbia.edu}}
        \and
        \textbf{Jens Ernstberger}\thanks{Technical University of Munich. Email: \emailhref{jens.ernstberger@gmail.com}}
        \and
        \textbf{Joseph Bonneau}\thanks{New York University, and a16z crypto. Email: \emailhref{jb6395@cs.nyu.edu}}
        \and
        \textbf{Scott Duke Kominers}\thanks{Harvard Business School, Harvard University, and a16z crypto. Email: \emailhref{kominers@fas.harvard.edu}}
        \and
        \textbf{Tim Roughgarden}\thanks{Columbia University, and a16z crypto. Email: \emailhref{tim.roughgarden@gmail.com}}
}
\date{Initial version: February 3, 2025 \\
      Current version: March 4, 2025
}

\maketitle

\thispagestyle{empty}
\vspace{-1.9em}
\begin{abstract}
We introduce the first formal model capturing the elicitation of unverifiable information from a party (the ``source") with implicit signals derived by other players (the ``observers"). Our model is motivated in part by applications in decentralized physical infrastructure networks (a.k.a.\ ``DePIN"), an emerging application domain in which physical services (e.g., sensor information, bandwidth, or energy) are provided at least in part by untrusted and self-interested parties. A key challenge in these \textit{signal network} applications is verifying the level of service that was actually provided by network participants.

We first establish a condition called source identifiability, which we show is necessary for the existence of a mechanism for which truthful signal reporting is a strict equilibrium. For a converse, we build on techniques from peer prediction to show that in every signal network that satisfies the source identifiability condition, there is in fact a strictly truthful mechanism, where truthful signal reporting gives strictly higher total expected payoff than any less informative equilibrium. We furthermore show that this truthful equilibrium is in fact the unique equilibrium of the mechanism if there is positive probability that any one observer is unconditionally honest (as would happen, for example, if an observer were run by the network owner). Also, by extending our condition to coalitions, we show that there are generally no collusion-resistant mechanisms in the settings that we consider.

We apply our framework and results to two DePIN applications: proving \textit{location}, and proving \textit{bandwidth}. In the location-proving setting 
observers learn (potentially enlarged) Euclidean distances to the source. Here, our condition has an appealing geometric interpretation, implying that the source's location can be truthfully elicited if and only if it is guaranteed to lie inside the convex hull of the observers. In the bandwidth-proving setting, we consider observers that receive noisy (and possibly throttled) evaluations of a source's bandwidth; we show that our mechanism gives a quasi-strict truthful equilibrium, meaning that the source is disincentivized from reporting a larger bandwidth than they have available.
\end{abstract}

\section{Introduction}
\label{sec:intro}

\subsection{Sources, Observers, and Manipulated Signals}

We consider a mechanism designer interested in eliciting information $x$, drawn from some abstract set $\calX$, known to a self-interested agent that we call the {\em source}. We assume that the designer cannot directly verify the accuracy of a self-report~$\hat{x} \in \calX$ by the source, but can instead rely also on the reports $\hat{\vec{y}} = \hat{y}_1,\ldots,\hat{y}_n$ of $n\ge 2$ self-interested {\em observers} that receive signals $\vec{y}$ related to $x$. We allow the source to manipulate the distribution from which observers' signals are drawn.

For example, $x$ could represent the true location of an object of interest and $\hat{x}$ the alleged location of that object (as reported by its owner, for example). Each observer~$i$ could represent a sensor, with~$y_i$ being that sensor's estimate of its distance from the object, as measured e.g.\ by the empirical round-trip time of communicating with it. The object may be able to manipulate observers' distance estimates, for example by deliberately delaying before responding to communication requests.

The primary goal of the paper is to characterize when this mechanism problem---the incentive-compatible recovery of the source's information from the (possibly manipulated and/or misreported) signals received by the mechanism---is solvable. More precisely, we ask:
\begin{enumerate}

\item Under what condition(s) on the allowable source manipulations does there exist a prior-free mechanism for which truthful behavior is a strict Bayesian Nash equilibrium?

\item Under what conditions can the truthful equilibrium be made unique?
\end{enumerate}

And conversely:
\begin{enumerate}\setcounter{enumi}{2}
\item Under what conditions is such a mechanism impossible?
    
\end{enumerate}
Our study is motivated in part by applications in decentralized physical infrastructure networks (a.k.a.\ ``DePIN''), an emerging application domain in which physical services are provided at least in part by untrusted and self-interested parties. A key challenge in such applications is how to verify the level of service that was actually provided by participants. The location-elicitation problem outlined above is a canonical DePIN application, which arises, for example, in contexts such as verifying that a resource like server or processing capacity is geographically distributed (which is important for robustness to local shocks such as weather events), as well as for confirming that decentralized data collection entities such as weather trackers are in the right place. Another canonical DePIN application is the elicitation of a source's available bandwidth, based on noisy measurements taken by observers that  may have been manipulated by the source artificially throttling its bandwidth.

We stress, however, that the model introduced in this paper is general and is not overly tailored to DePIN applications. For example, the following problem is isomorphic to the above bandwidth-elicitation problem: elicit the true ``quality'' of a candidate (student, job applicant, etc.) from noisy measurements by observers (letters of recommendation, references, etc.) that may have been manipulated 
in certain ways by the candidate (e.g., the candidate misrepresenting their abilities to the observers).

\subsection{Our Contributions}

On the modeling and analysis side, our primary contributions are the following:
\begin{itemize}

\item We introduce a novel information elicitation problem, with the key feature that the desired information is known solely to one self-interested agent (the source) who can both misreport that information and manipulate the distribution over the correlated signals observed by other agents.

\item We provide a sharp characterization of when truthful elicitation is possible in this setting: if and only if an intuitive condition that we call {\em source identifiability} holds. Intuitively, source identifiability asserts that the source's true information could in principle be recovered from an infinite number of samples from the manipulated signal distribution. In concrete examples, source identifiability translates to usable guidelines in practice.

\item We prove that whenever source identifiability fails to hold, there is no mechanism for which truthful signal reporting is a strict equilibrium.

\item When the source identifiability condition holds, meanwhile, we build on techniques from peer prediction to design a signal elicitation mechanism for which truthful reporting is a strictly optimal equilibrium for network participants, in the sense that any less informative equilibrium has strictly lower total expected payoff than is achieved under truthful signal reporting. 

\item Our mechanism's guarantee is even stronger when at least one observer is unconditionally honest with positive probability---in that case, the truthful, value-maximizing equilibrium is unique.

\item  We extend our characterization through source identifiability to coalitions, and as a consequence show that there are generally no collusion-resistant mechanisms in the settings that we consider. 

\end{itemize}

On the applied side,
our work is---to our knowledge---the first to take DePIN signal elicitation seriously as an incentive design problem. Existing DePIN frameworks have effectively ignored incentive issues by either simply assuming truthful reporting, or  
through out-of-mechanism procedures for resolving reporting issues through governance or audits. %
Our model and results offer a number of insights into DePIN applications:
\begin{itemize}

\item
We use our general results to characterize when truthful signal elicitation is possible in location signal networks and bandwidth signal networks. These two DePIN categories are actively used in practice \parencite[see, e.g.,][]{backhaul,bftpoloc}, and our results imply crucial design considerations for setting them up, as well as how signal elicitation should be conducted once these networks are deployed.

    \item In the location-proving setting, %
observers learn (potentially enlarged) Euclidean distances to the source. Here, the source identifiability condition has an appealing geometric interpretation, implying that the source's location can be truthfully elicited if and only if it is guaranteed to lie inside the convex hull of the observers. In other words, for incentive-compatible location recovery, be sure to ``surround'' with observers the possible locations of the object of interest.

\item In the bandwidth-proving setting, we consider observers that receive noisy (and possibly throttled) evaluations of a source's bandwidth; we show that our mechanism gives a quasi-strict truthful equilibrium, meaning that the source is disincentivized from reporting a larger bandwidth than they have available.

    \item Our result on equilibrium uniqueness under a mild unconditional hosesty assumption speaks to and reinforces the importance of ``decentralization'' in DePIN: this assumption seems particularly likely to hold in a large decentralized setting because because when there are many independent agents, there is a nontrivial possibility that at least one of them is not compromised; hence, in a well organized, (sufficiently) decentralized physical infrastructure network, the mere threat of being compared against an honest agent induces coordination on a truthful revelation equilibrium.
    
\item Our impossibility result for collusion-resistant mechanisms (e.g., in settings where an agent can, through sybils, act as both a source and an observer) can be interpreted as the first formal treatment of what is known as the ``self-dealing'' problem in DePIN. Our result implies that self-dealing must be handled through out-of-mechanism means, such as restrictions on permissionless entry, further refined trust assumptions, or both.

\end{itemize}
More broadly, our work here shows that DePIN networks are some of the largest and most natural applications for peer prediction and related techniques to ever arise ``in the field.''

\subsection{Related work}
\label{subsec:litrev}

Our work relates to the active and expansive body of work on peer prediction mechanisms \parencite{prelec2004,miller2005,wit2012,zhang2014elicitability,waggonerchen,prelec2021,schoenebeck2023two,kong2019information,radfalt14,kong2020information,peerneighborhood}.
A core difference relative to the peer prediction setting is that, in our work, the source is allowed to actively manipulate the other players' observed signals before those signals are elicited.\footnote{While the possibility of the source manipulating observers' signals has not been considered in the peer prediction literature, it does seem plausible that it would be a concern in some settings in which peer prediction is used in practice. For example, in settings like that of \textcite{hussam2022targeting} where peer prediction is used to elicit the ability of microentrepreneurs from assessments by their neighbors, we might imagine that, prior to participating in the peer prediction mechanism, individuals would invest effort in convincing their neighbors that they are especially effective at innovating and/or making efficient use of capital. In this sense, our work 
suggests how to augment the traditional goals of
peer prediction mechanism design to address a practical robustness concern that is typically left outside the boundaries of that model.}
Most mechanisms for the truthful elicitation of unverifiable information are surprisingly brittle (sensitive) to a number of assumptions; restrictive assumptions have been usually placed on the information structure, population size, signal spaces, and whether the mechanism is aware of the setting's joint distribution~\parencite{zhang2014elicitability,schoenebeck2023two}.
Currently, the peer prediction mechanisms with the most minimal set of assumptions to obtain ex-ante Pareto dominance to any uninformative equilibrium and strong truthfulness respectively have been given by \textcite{schoenebeck2023two} and \textcite{prelec2021} correspondingly.
The former uses a stochastically relevant setting about signals received from individuals by the nature, and the latter requires a stronger assumption than stochastic-relevance of signals, specifically second-order stochastic relevance about how one's posterior distribution about another player's signal changes, using a third player's (truthful) signal.
Generic impossibilities in peer prediction regimes with few assumptions have been given by \textcite{waggonerchen,zhang2014elicitability}; our technique for proving impossibility in non-source-identifiable model specifications is inspired by their general ideas.

In the multiple-questions peer prediction regime, to obtain truthfulness, agents are asked to report on multiple correlated tasks \parencite{dasgupta2013crowdsourced,shnayder2016informed}.
Alternatives to this method including estimating the ground truth \parencite{autograder}, including with the help of machine learning techniques, thereby almost making the problem one where partial access to the ground truth can be granted.
Relatedly, our unconditional honesty extension bears a semblance to an observation by \textcite{leytonbrown_peergrading_groundtruth} that in costly information-gathering scenarios (such as peer grading where effort has to be exerted) comparison to the ground truth with low probability is sufficient to yield a truthful elicitation mechanism; in their work, the trusted evaluator that provides an unbiased estimator of the ground truth is known in advance.

The role of the possibility of an unconditional observer in equilibrium selection is reminiscent of the role of ``commitment types" in reputation games (see, e.g., \cite{fudenbergmaskin,JARAMILLO2010416}, as well as \cite{levin2006reputation} and the references therein), although in our setting, the commitment type disciplines behavior in a single-shot mechanism rather than in a repeated game where a reputation for commitment can be observed over time.
Likewise, the need for the signal structure to be refined enough to render different strategies probabilistically distinguishable appears in various forms throughout game theory; for example, such a condition is used in characterizing when cooperation is possible in repeated games with imperfect public monitoring \parencite{fudenberg2009folk,abreu}.

The nascent literature on Decentralized Physical Infrastructure Networks (DePIN) has studied Byzantine (i.e., arbitrary adversarial) behavior in information elicitation systems, with a focus on setting limits on the fraction of the population that can be Byzantine, and assuming that the rest are unconditionally honest, without the consideration of any incentives \parencite{bftpoloc,goat,backhaul}. Our work here crucially differs in that we study the players' rational behavior according to utility functions.
\textcite{bftpoloc,backhaul} study the respective settings of location and bandwidth capacity elicitation with this in mind.
Both \textcite{bftpoloc} and likewise \textcite{goat} substantiate the practicality of using (possibly enlarged by manipulation) distances as a relevant assumption in the setting of location verification, and treat players as non-strategic; instead, the former is based on the adversarial model and performs Byzantine-resistant triangulation, while the latter considers the servers trustworthy in their timestamping.
We formally study how incentives play out with such mechanisms, and thus achieve a great synergy with high practical relevance.

\textcite{oraclefaltings}, motivated in part by the design of decentralized oracle networks, give a non-strongly-truthful peer prediction mechanism in a setting with subjective, correlated beliefs when there are binary observations. The key novelty in the model of \textcite{oraclefaltings} is the assumption that agents face some outside incentive to misreport (which depends on the aggregate outcome), and the paper focuses on how to adapt mechanisms for peer consistency \parencite{FR17} and use suitable side payments between agents to overcome these incentives; the paper also derives stronger results under assumptions about the number of agents that are unconditionally honest.
\textcite{verifierdilemma} study the specific homogeneous partially-verifiable setting of proof verification, where the status of a common object (the "proof") can be obtained by players exerting costly effort, and implement a peer prediction mechanism to address rational verifier apathy (in a blockchain context, the "verifier's dilemma"); in our setting, the model is built on the presumption of manipulability of signals received by participants.

\section{Setting}
\label{sec:model}

In this section, we will introduce the model along with our definitions.
Unless otherwise explicitly specified (e.g., when we will be discussing robustness to coalitions), all agents are assumed rational and risk-neutral.
Among all players, there is one agent (the source) which has a distinguished role, in that we are interested to elicit her (unverifiable) private information from her interaction with the rest of the players in the game induced by the mechanism.
The rest of the players have the role of observers which interact with the source and the mechanism, as described informally in \Cref{sec:intro}.

\subsection{The basic model}

A complete description of the model follows:
\begin{enumerate}
\item Nature chooses, from a joint prior distribution, the source's signal $x\in\calX$ and $n$ (private) observers' characteristics $\{p_i\}$.\footnote{Our mechanism will be independent of this distribution (i.e., prior-free). Bayesian Nash equilibria of the mechanism are with respect to this prior. We assume that $x$ can take on at least two different values and that $n \ge 2$.}
\item The source chooses an $n$-dimensional distribution $\calD$ either from $L_x$, where $L_x$ is a feasible set of distributions of reports (according to application-specific modeling), or any other distribution that does not correspond to any feasible distribution if the source were truthful. Formally, the source chooses $\calD \in L_x \cup \left\{\hat{\calD} \mid \forall x\in\calX: \hat{\calD} \not\in L_x \right\}$. We denote by $L$ the multi-valued function defined by $f(x) \defeq L_x$ wherever the context is clear, and we term $L$ the \emph{model specification}.
\item Nature chooses $\vec{y} \sim \calD$, and each $y_i$ gets sent to every one of the $n$ observers (each one privately observes their own signal).
\end{enumerate}

The observers and source then participate in a mechanism $M$, with common knowledge of all information above, including the model specification $L$.

This model allows potentially for the source to pick among adversarial values, if the distributions belonging to each $L_x$ are modeled as point masses. In that special case, the set of distributions is then a set of points, out of which the source may choose their favorite one.

Note that in this paper we will consider discrete signal spaces. Our work can be generalized to continuous signal spaces by using techniques in a similar fashion to \textcite{schoenebeck2023two,radfalt14,peerneighborhood}.

We move on to define our condition (\Cref{def:rel}) that we will tie to the existence of a mechanism where signal-truthfulness is a strict Bayesian Nash equilibrium.
We use the standard definitions for the Bayesian Nash equilibrium in games with incomplete information.

\begin{definition}
[Source identifiability]
\label{def:rel}
A source in a model specification $L$ is called \emph{identifiable} if for any two different $x_1\ne x_2$, \[ L_{x_1} \cap L_{x_2} = \emptyset \,, \] i.e., there exists no distribution that's exactly the same for two different source signals.
Equivalently, a source is identifiable if and only if the \emph{multi-valued} function defined by $x \rightrightarrows L_x$ is injective.
\end{definition}

We call this property \emph{identifiability}, because in line with statistics, it roughly implies that the model's parameters can be uniquely determined from the probability distribution of the observed data. In other words, if one could somehow \emph{perfectly} observe the true data-generating process---e.g., with an infinite amount of data---they would be able to uniquely deduce the value of the parameter from that distribution.
Thus, our mechanism's intuition is to make use of strictly proper scoring rules to ensure that we can truthfully obtain the source's value, given the rational observers are honest in their signal reporting. We stress that \Cref{def:rel} allows for distributions in two  different sets $L_{x_1}$ and $L_{x_2}$ that are arbitrarily close to each other (e.g., in total variation distance), and only forbids identical distributions.

\subsection{Example: proof of location}
\label{subsec:setting_loc}

One example we referred to in \Cref{sec:intro} was \textit{location verification}.
We can now see how this maps to the formalism in our model, in the following way: Suppose that both the source and observers are located somewhere on the plane.
The observers' locations on the plane are known and in our model, correspond to vectors $p_i\in\R^2$.
The mechanism designer's objective is to estimate the (a priori unknown) location of the source, which is going to be $x\in\R^2$.
Observers gather information from the source, which consist of positive numbers $y_i$ that are interpreted as the distances between observer $i$ and the source.

For this example, we suppose that the source can misrepresent its distance to each observer, but can only artificially \emph{increase} its distance to each one individually (e.g., by delaying communications); it cannot make its distance seem smaller than it actually is.
In this sense, this example allows arbitrary ``one-sided manipulation'' by the source.
This constraint would be represented with our model specification as $L_x$ (the feasible set of reports) being a (possibly uncountable) set of point-mass distributions: the set of all potentially enlarged distances to each observer.
The source is therefore able to choose its favorite enlarged distances that each observer individually receives.\footnote{The randomization by nature of $\vec{y}\sim\calD$ is meaningless in this example, as every ``distribution" is just a point mass.}

What does \Cref{def:rel} translate to in this setting? In \Cref{subsec:main_loc}, we show that source identifiability translates to a convex hull condition: a source's location is identifiable if and only if all possible locations of the source are contained in the convex hull formed by the observers' locations.\footnote{For the exact formalism and details, we refer the interested reader to \Cref{subsec:main_loc}.} This convex hull condition is intuitive and---importantly---gives guidance for how observers should be positioned in practice.

\section{Main results}
\label{sec:main_results}

We begin with our impossibility result for a signal-truthful mechanism in the case of a model specification where the source is not identifiable.

\begin{theorem}
[Impossibility when source is not identifiable]
\label{thm:impossibility}
Given any model specification $L$ where the source is not identifiable, i.e., does not satisfy \Cref{def:rel}, there exists no mechanism $M$ taking as input not only the players' self-declared $\hat{x}, \hat{\calD}, \hat{\vec{y}}$ but also the model specification $L$, for which signal truthfulness is a strict Bayesian Nash equilibrium.
\end{theorem}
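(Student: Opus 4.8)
The plan is to exploit the failure of source identifiability to construct two scenarios that are statistically indistinguishable to the mechanism, and then to derive a contradiction from the strictness of the equilibrium via a short symmetry (exchange) argument. Since the source is not identifiable, by \Cref{def:rel} there exist distinct signals $x_1 \ne x_2$ and a common distribution $\calD^\star \in L_{x_1} \cap L_{x_2}$. This $\calD^\star$ is a feasible choice both for a truthful source of type $x_1$ and for a truthful source of type $x_2$; this is precisely what lets a source of one type \emph{mimic} the other in a way the mechanism cannot detect, even though $M$ is given the model specification $L$ as input.

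First I would establish the key indistinguishability observation. Suppose, for contradiction, that signal truthfulness is a strict Bayesian Nash equilibrium of $M$. Fix the observers to their truthful equilibrium strategies and consider a source that has chosen $\calD^\star$. Because each observer sees only its own signal $y_i$ and reports it truthfully, the vector of observer reports equals $\vec y \sim \calD^\star$; crucially, this holds regardless of the source's true type and regardless of the (possibly $x$-correlated) observer characteristics $\{p_i\}$, since the truthful report $\hat y_i = y_i$ does not depend on $p_i$. The source's payoff is the payment assigned by $M$, which is a function of the reports $(\hat x, \hat{\calD}, \hat{\vec y})$ alone, as the mechanism observes neither the true type nor $\{p_i\}$. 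Hence, for a source that uses $\calD^\star$ and reports $\hat{\calD} = \calD^\star$, the expected payment is $g(\hat x) \defeq \E_{\vec y \sim \calD^\star}[\,\text{payment to the source on input } (\hat x, \calD^\star, \vec y)\,]$, which depends only on the reported value $\hat x$. This expression makes no reference to the source's true type, so the same function $g$ applies whether the true type is $x_1$ or $x_2$. Finally, a report $(\hat x, \calD^\star)$ with $\hat x \in \{x_1, x_2\}$ is consistent with $L$---since $\calD^\star \in L_{x_1} \cap L_{x_2}$---so $M$ cannot reject either report as infeasible; this is the one place where the failure of identifiability is used.

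The contradiction is then immediate. A source of true type $x_1$ that has chosen $\calD^\star \in L_{x_1}$ reports $(x_1, \calD^\star)$ truthfully in equilibrium, and strictness of the equilibrium means this strictly beats the deviation to $(x_2, \calD^\star)$, i.e.\ $g(x_1) > g(x_2)$. Symmetrically, a source of true type $x_2$ that has chosen $\calD^\star \in L_{x_2}$ reports $(x_2, \calD^\star)$ truthfully, and strictness gives $g(x_2) > g(x_1)$. Together these yield $g(x_1) > g(x_2) > g(x_1)$, a contradiction, so no such mechanism $M$ can exist.

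The main thing to get right---rather than a genuinely hard obstacle---is pinning down what ``truthful reporting is a strict equilibrium'' commits us to: the strict best-response condition must apply to a source that has chosen the specific distribution $\calD^\star$, which is legitimate precisely because $\calD^\star$ is a feasible choice for both types. The second point requiring care is isolating the source's utility as the mechanism's payment, which depends on the reports only; this is exactly what collapses the expected payment to a single type-independent function $g(\cdot)$ and drives the symmetry. (If the source additionally derived type-dependent value directly from the mechanism's output, one would rerun the argument carrying those terms, but in the baseline model the payment-only view suffices.)
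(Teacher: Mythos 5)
Your proposal is correct and follows essentially the same route as the paper's proof: pick $\calD^\star \in L_{x_1}\cap L_{x_2}$, observe that with truthful observers the source's expected payment depends only on the reported $\hat x$ (since $\vec y\sim\calD^\star$ regardless of the true type), and derive $g(x_1)>g(x_2)>g(x_1)$ from strictness applied to both types. Your explicit isolation of the type-independent function $g(\cdot)$ is a slightly cleaner packaging of the paper's chain of equalities and inequalities, but the argument is the same.
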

\begin{proof}
For the sake of contradiction, assume there was such a mechanism $M$, and that it assigns a payoff $u((\hat{x}, \hat{\calD}), \hat{\vec{y}}, L)$ to the source.
Because the source in $L$ is not identifiable, there exist $x_1\ne x_2$ and a joint distribution of manipulated observer signals $\calD$ such that $\calD \in L_{x_1} \cap L_{x_2}$.

Since signal truthfulness is a strict Bayesian Nash equilibrium for $M$, call the respective strategy profile functions $(s_0(\cdot), s_1(\cdot),\dots,s_n(\cdot))$, where $s_0$ denotes the source's strategy, %
mapping the private values of each player to their actions in the mechanism (the actions are declaring $(\hat{x}, \hat{\calD})$ for the source and $\hat{y}_i$ for observer $i$); then, it must be that
\begin{align}
\E_{\vec{y}\sim\calD} \left[ u((x_1, \calD), \vec{y}, L) | x_1 \right] > \E_{\vec{y}\sim\calD} \left[ u((x_2, \calD), \vec{y}, L) | x_1 \right]
\label{eq:imp1}
\\
\E_{\vec{y}\sim\calD} \left[ u((x_2, \calD), \vec{y}, L) | x_2 \right] > \E_{\vec{y}\sim\calD} \left[ u((x_1, \calD), \vec{y}, L) | x_2 \right]
\label{eq:imp2}
\end{align}

Build the following rogue (i.e., non-truthful) strategy where the source is truthful when its private value is $x_1$ but behaves the same for $x_2$ (obviously the truthful $\calD$, chosen by the source, is feasible for both signals $x_1, x_2$ by the model specification), i.e., $s_0'(x_2) = (x_1, \calD)$ and otherwise $s_0'$ is the same as $s_0$. We now prove that, since this gives the same expected payoff to the source (conditioning on $x_1$) as the truthful strategy, the Bayesian Nash equilibrium cannot be strict, which is the contradiction finishing the proof.

Indeed, we have that
\begin{align*}
&\E_{\vec{y}\sim\calD} \left[ u((x_2, \calD), \vec{y}, L) | x_2 \right]
=
\E_{\vec{y}\sim\calD} \left[ u((x_2, \calD), \vec{y}, L) | x_1 \right]
<
\E_{\vec{y}\sim\calD} \left[ u((x_1, \calD), \vec{y}, L) | x_1 \right]
=
\\
&\E_{\vec{y}\sim\calD} \left[ u((x_1, \calD), \vec{y}, L) | x_2 \right]
<
\E_{\vec{y}\sim\calD} \left[ u((x_2, \calD), \vec{y}, L) | x_2 \right]
\,,
\end{align*}
which is a contradiction, and where the equalities hold because the conditional distribution $\calD$ is the same and the conditioned random variable is independent of the conditioning random variable, and the inequalities are \Cref{eq:imp1,eq:imp2} respectively.
\end{proof}

We move on to the positive results, and give a mechanism to truthfully elicit the unverifiable information of the source and observers, subject to \Cref{def:rel}.
For technical convenience, and without loss of generality, we will also make the following assumption which is roughly stochastic relevance \emph{conditioned on the source's truthfulness}:\footnote{Because the elicitation of the source's signal is the final sought-after consequence, our results can be generalized to the case that the technical condition does not hold, and the optimal strategy is a quasi-strict equilibrium where observer $i$ submits any $y_i$ that---conditioned on the truthful $x$---yields the exact same marginal distribution for the rest of all observers, i.e., the strategy groups the non-distinct (in terms of the joint probability distribution) $y_i$'s.}
\begin{assumption}
[Technical Condition]
For any $x\in\calX$, distribution $\calD\in L_x$, $i\in [n]$, and two $y_i \ne y_i'$,
\[
\Pr_{\calD|y_i}[\vec{y}_{-i} | y_i]
\ne
\Pr_{\calD|y_i'}[\vec{y}_{-i} | y_i']
\,,
\]
i.e., there do not exist two different $y_i \ne y_i'$ that induce the same conditional distribution (for the truthful $x$) on the rest of all truthfully-received observers' signals.
\label{ass:cond}
\end{assumption}
\Cref{ass:cond} effectively means that $y_i$ causes the posterior of any observer $i$ to change based on the (truthful) value they receive from the source. In most common regimes, such an assumption would hold, for example because the source has non-overlapping sets of $y_i$'s (c.f., \Cref{subsec:main_loc}), or because each of the observers obtains an independent estimate centered on the source's quality of service (c.f., \Cref{subsec:bandwidth}).

The sub-mechanism that we will use to gather information from the observers about the source belongs to the class of Bayesian Truth Serum (BTS) mechanisms, pioneered by \textcite{prelec2004}; we specifically use one of the mechanisms in \textcite{prelec2021}, although we remark that similar theorems to ours could be proven using many other similar mechanisms, developed by \textcite{prelec2021,schoenebeck2023two}.

The mechanism $M$ is presented in \Cref{m}.
We denote by $\bone\{A\}$ the indicator function that is 1 if $A$ happens, otherwise 0.
Recall that a strictly proper scoring rule is a (potentially extended) real-valued function $P(\calD, \vec{y})$ that takes as input a probability measure $\calD$ and a realized outcome $\vec{y}$, and outputs a real number (reward) such that
\[
\E_{\vec{y} \sim \calD}[P(\calD', \vec{y})]
\le
\E_{\vec{y} \sim \calD}[P(\calD, \vec{y})]
\text{ for all distributions } \calD, \calD'
\,,
\]
with equality if and only if $\calD'=\calD$.
\begin{algorithm}
\caption{Mechanism $M$ run after model with inputs $(x, \calD), \vec{y}\sim\calD$}\label{m}
\begin{enumerate}[label=\arabic*.]
\item Observers submit $\hat{y}_i$ to the mechanism.
\item The source submits $(\hat{x}, \hat{\calD})$, where $\hat{x}\in\calX$, to the mechanism and to the observers.
\item Observers submit $\pi_i \in (0,1]$ and $\hat{x}_i \in \calX \cup \{\emptyset\}$ to the mechanism.
\item Each observer $i$ is paired (by the mechanism) with a random observer $j$, and submits a probability distribution for $j$'s signal to the mechanism.\footnotemark\ The probability distribution is defined by non-negative numbers $\hat{q}_i(\cdot)$ that sum to 1 across $j$'s support of signals.
\item Each observer $i$ obtains reward
\[
  \log\left(\frac
  {\hat{q}_i(\hat{y}_j)}
  {\pi_j}\right)
- \left| \log\left(\frac
  {\hat{q}_j(\hat{y}_i) \pi_j}
  {\hat{q}_i(\hat{y}_j) \pi_i}\right)\right|
+
\bone\{\hat{x}_1 = \dots = \hat{x}_n\}
\,.
\]
\item The source obtains reward
\begin{align}
P(\hat{\calD}, \vec{\hat{y}})
+
\bone\{\hat{x} = \hat{x}_1 = \dots = \hat{x}_n\}
\,,
\label{eq:prov_util}
\end{align}
where $P(\cdot, \cdot)$ is any strictly proper scoring rule.
\end{enumerate}
\end{algorithm}
\footnotetext{We note that, per standard procedure in peer prediction mechanisms~\parencite[see, e.g.,][]{schoenebeck2023two}, one need not ask for an entire probability distribution, but just a single probability (at least in the discrete signals case) by the mechanism choosing a random value as a virtual signal and asking $i$ for the probability that $j$'s signal is that virtual signal; $i$'s reward is then to be modified such that if the randomly chosen signal value matches the actually submitted value from $j$ then the normal reward function is followed, otherwise a (maximal) reward of $0$ is given.}

This mechanism is prior-free. Further, the mechanism does not require that $\hat{\calD}\in L_{\hat{x}}$, and for this reason is also free of the model specification $L$.
In other words, the mechanism need not know the model specification at all, and our analysis of the mechanism holds so long as the true (private) signals of the observers indeed come from that model.

We next prove a number of desirable properties of this generic mechanism.
To state it, we first define the signal-truthful strategy profiles:

\begin{definition}
\label{def:sigtruth}
We call a strategy profile \emph{signal-truthful} if:
\begin{itemize}
    \item The source, given $x$, chooses $\calD\in L_x$,\footnote{Note that our theorem will state that there exists some $\calD$ for a signal-truthful strategy profile; not all $\calD$'s might correspond to signal-truthful profiles that are strict Bayesian equilibria.} and then submits $(\hat{x}, \hat{\calD}) = (x, \calD)$.
    \item Observer $i$, given $y_i$, the source's strategy $(\hat{x}, \hat{\calD})$, and pairing $j$, submits $\hat{y}_i = y_i, \pi_i = c \cdot \hat{\calD}_i(y_i)$ (the probability of the marginal on $i$ to get $y_i$ as per $\hat{\calD}$, rescaled by any $0 < c \le 1$ which is fixed across observers), $\hat{x}_i$ such that $\hat{\calD} \in L_{\hat{x}_i}$ (unique by source identifiability) or $\hat{x}_i = \emptyset$ if none exists, and $\hat{q}_i(\cdot)$ to be the posterior on $j$'s signal conditional on $y_i$ as per $\hat{\calD}$.
\end{itemize}
\end{definition}

\begin{theorem}
[Truthful equilibrium]
\label{thm:mechanism}
For any $x\in\calX$ (i.e., any prior on $\calX$) and any model specification $L$ where the source is identifiable as per \Cref{def:rel}, and subject to \Cref{ass:cond}, there exists $\calD\in L_x$ such that for any $0 < c \le 1$, the signal-truthful strategy profiles as defined in \Cref{def:sigtruth} with the choice of $\calD$ are strict Bayesian Nash equilibria of the game induced by the model and the mechanism $M$, where strictness is defined disregarding (i.e., aggregating over) any distribution $\calD\in L_x$ for the truthful $x$, for all $x\in\calX$.\footnote{Recall that this does not detract from signal truthfulness by source identifiability.}
Additionally, for any less informative equilibrium of the mechanism, there exists a signal-truthful equilibrium with strictly higher total expected payoff.
\end{theorem}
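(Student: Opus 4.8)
The plan is to verify the strict-equilibrium claim report-by-report for each player type and then, separately, establish the informativeness (Pareto) property through an information-theoretic accounting of the total payoff. Throughout I fix the true source value $x$, work conditional on it, and use that the observers compute their posteriors from the source's reported $\hat\calD$ (seen in Step~2 of \Cref{m}); under a signal-truthful source, $\hat\calD=\calD$ is the true generating law, so the posterior ``as per $\hat\calD$'' is the genuine Bayesian posterior.

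First I would fix the source and all observers $\ne i$ at their signal-truthful reports and show $i$'s unique best response is signal-truthful. The coordination bonus $\bone\{\hat x_1=\dots=\hat x_n\}$ strictly pins down $\hat x_i$: by source identifiability (\Cref{def:rel}) the value $v$ with $\hat\calD\in L_v$ is unique, every honest observer reports it, and any $\hat x_i$ disagreeing forfeits $+1$. For the triple $(\hat y_i,\pi_i,\hat q_i)$ I would split the two information terms. As a function of $i$'s reports, the log term $\E[\log(\hat q_i(\hat y_j)/\pi_j)]$ depends only on $\hat q_i$, and by strict properness of the logarithmic scoring rule it is uniquely maximized by the true posterior on $j$'s signal given $y_i$. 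The penalty $-\left|\log(\hat q_j(\hat y_i)\pi_j/(\hat q_i(\hat y_j)\pi_i))\right|$ is nonpositive and, at the truthful profile, vanishes pointwise because Bayes' rule makes both sides equal the same joint probability $c\Pr[y_i,y_j]$ (the rescaling constant $c$ cancels, which is exactly why the statement holds for every $c\in(0,1]$). The crux is to rule out a deviation that uses the penalty to compensate for degrading the log term: here I would invoke \Cref{ass:cond}, which guarantees that distinct signals induce distinct conditional laws, so misreporting $\hat y_i\ne y_i$ changes the argument of $\hat q_j(\cdot)$ and forces a strictly positive expected penalty, while pulling $\hat q_i$ off the true posterior strictly lowers the log term — and neither loss can offset the other. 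I expect this step to be the main obstacle, precisely because the penalty is nonsmooth and couples $i$'s and $j$'s reports.

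Next I would handle the source. The key observation is that $\hat\calD$ is scored against the realized, truthfully reported signals $\vec{\hat y}=\vec y\sim\calD$ through the strictly proper rule $P(\hat\calD,\vec y)$, so for any generating distribution $\calD$ the unique optimal report is $\hat\calD=\calD$. The feasibility constraint then does the real work: with true value $x$ the source may generate only from $L_x$ or from a rogue distribution lying in no $L_{x'}$, and by source identifiability $L_x$ is disjoint from every $L_{x'}$; hence the source cannot produce signals consistent with any $\hat\calD'\in L_{x'}$ for $x'\ne x$. To claim $\hat x=x'$ and still collect the bonus $\bone\{\hat x=\hat x_1=\dots=\hat x_n\}$ the source would have to report some $\hat\calD'\in L_{x'}$ (so the observers' computed $\hat x_i$ agree on $x'$), but then $\hat\calD'\ne\calD$ and it strictly loses on $P$; a rogue generating law forfeits the bonus entirely since no observer can report a matching $\hat x_i$. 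This pins signal truthfulness. The existence clause ``there exists $\calD\in L_x$'' I would discharge by taking $\calD$ to maximize the self-score $\E_{\vec y\sim\calD}[P(\calD,\vec y)]$ over $L_x$, making truthful reporting a genuine best response within the truthful class and hence a strict equilibrium once deviations among $\calD\in L_x$ are aggregated as the statement prescribes; the attainment of this maximum over a possibly uncountable $L_x$ is the one point I would flag as needing a mild regularity remark.

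Finally, for the Pareto assertion I would sum $\sum_i\E[\text{reward}_i]+\E[\text{source reward}]$ and recognize, once the penalties vanish at any consistent profile, the observer part as a sum over pairs of an estimate of the mutual information $I(y_i;y_j)$ between reported signals, with the source part contributing the self-score of $\hat\calD$. Since truthful reporting reveals the finest signal information, any less informative equilibrium is a garbling of the truthful reports; by the data-processing inequality in strict form (again using \Cref{ass:cond} to exclude information-preserving relabelings) the mutual-information sum strictly drops, and the reported $\hat\calD$ can only coarsen, strictly lowering the self-score. Summing yields strictly higher total expected payoff under the signal-truthful equilibrium, mirroring the strong-truthfulness guarantee of the base mechanism of \textcite{prelec2021}.
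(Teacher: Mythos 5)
Your first part tracks the paper's proof: the source and observers are handled separately, with the strictly proper rule $P$ pinning $\hat\calD=\calD$, source identifiability pinning the $\hat x,\hat x_i$ coordination, and \Cref{ass:cond} delivering strictness of the observer sub-mechanism. Where the paper simply cites the strict truthfulness of the underlying peer-prediction mechanism of \textcite{prelec2021}, you unpack it, and your unpacking is essentially sound: the log term is maximized uniquely by the true posterior, the penalty is nonpositive and vanishes at truth by Bayes' rule (with $c$ cancelling), and since the two terms are \emph{separately} maximized at the truthful report, no cross-compensation is possible --- the step you flag as the main obstacle in fact closes cleanly, with \Cref{ass:cond} ruling out a zero-penalty misreport of $\hat y_i$. (Your device of picking $\calD$ to maximize the self-score is unnecessary: the theorem's strictness explicitly aggregates over the choice of $\calD\in L_x$, so no optimization over $L_x$ is needed; the existential clause is there because not every $\calD\in L_x$ need support the profile as an equilibrium.)

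The genuine gap is in the second part. You assert that in a less informative equilibrium ``the reported $\hat\calD$ can only coarsen, strictly lowering the self-score,'' but this is unsupported and in general false: the source best-responds by reporting the law of the \emph{garbled} reports, and $\E_{\vec y'\sim\calD'}[P(\calD',\vec y')]$ is a generalized (negative) entropy that garbling can \emph{increase} --- e.g., if observers pool on a constant report, the source's self-score is maximal. So summing your mutual-information bound for the observers with your claim about the source does not yield the stated total-payoff dominance. The paper's statement is carefully worded to avoid this: it only claims that \emph{there exists} a signal-truthful equilibrium (i.e., some choice of $c$) with strictly higher total payoff, and the proof exploits the free parameter $c<1$ --- each observer's truthful payoff contains a $\log(1/c)$ term, so after the data-processing inequality gives a strict per-observer gain, one chooses $c$ small enough that the observers' aggregate advantage dominates whatever the source's term contributes. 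You need either this device or a separate argument bounding the source's term; as written, your accounting does not close. The remainder of your part-two argument (strict data-processing via \Cref{ass:cond}, and treating the pooling-on-$\hat x$ case as a tie in the indicator) matches the paper.
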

\begin{proof}
First, we prove strict truthfulness. Consider the source and observers separately.\footnote{In what follows, because of the aggregating notion of strictness explained in the theorem's statement, we show that, in the extensive form game, strictness is satisfied disregarding (i.e., conditioning on) the choice $\calD\in L_x$ that the source makes in the first step of the game before mechanism $M$.}
\begin{itemize}
    \item For the source, assuming all observers are truthful ($\hat{\vec{y}} = \vec{y}$): the source selects some $\calD \in L_x \cup \left\{\hat{\calD} \mid \forall x\in\calX: \hat{\calD} \not\in L_x \right\}$. By the strict properness of scoring rule $P$, the unique best-response is to submit $\hat{\calD}=\calD$. Because $\hat{\calD}=\calD$, observers will choose $\hat{x}_i = x$ for all $i$ by \Cref{def:rel} if the chosen $\calD\in L_x$, otherwise they will choose $\emptyset$ (which is infeasible for the source to report, as it's the special signal of the observers that the source was not truthful), since (again by source identifiability) there is no other $x' \ne x$ that has the same distribution $\calD$. Strictness for $\hat{x} = x$ follows.
    \item For observer $i$, assuming all other observers and the source are honest (in particular, this means $\hat{\calD} = \calD \in L_x$): first, $\hat{x}_i = \hat{x}$ is the unique best-response by \Cref{def:rel}. Second, by the stochastic relevance of \Cref{ass:cond} conditioned on the source's truthfulness hence a distribution $\calD \in L_x$, the submechanism among the observers operates as a strictly truthful peer prediction mechanism \parencite{prelec2021}. Strict truthfulness for the rest of the strategic choices of observer $i$ follows by the basic mechanism's strict truthfulness.
\end{itemize}

It is left to prove the second part of the theorem.
Any less informative equilibrium in $M$ exhibits either pooling on $\hat{x} = \hat{x}_i \ne x$ or is a less informative equilibrium of the sub-mechanism with observers.
In the latter case, first consider the associated payoffs of the observers based only on their reports except for $\hat{x}_i$'s. Applying the data processing inequality twice \parencite[see, e.g.,][]{prelec2021}, any signal garbling equilibrium that is less informative (by either randomizing or pooling over a strategy) has strictly less expected payoffs for every observer than the corresponding signal-truthful equilibrium. Therefore, there exists a $c < 1$ such that the total expected payoff (including the source) of the corresponding signal-truthful equilibrium is strictly higher.
For the former case, we repeat the latter argument, because the equilibrium with $\hat{x}=\hat{x}_i=x$ is a tie in the individual expected payoffs conditional on each player's signals.
This proves the second part of the theorem.
\end{proof}

A common observation with some peer prediction mechanisms \parencite{schoenebeck2023two} is that it is sometimes hard to imagine that players would arrive at non-truthful equilibria that require unnatural coordination in their play.
In mechanism $M$ and any signal-truthful equilibrium, $\hat{\calD}=\calD$ provides a natural point for reports on $\hat{x}, \hat{x}_i$ to pool on; any other choice of $\hat{\calD}$ by the source would in expectation provide them with strictly less payoff, therefore the aforementioned equilibrium could be a natural coordinating strategy profile in the extensive-form game.

\section{Extensions}
\label{sec:ext}

\subsection{Unconditional honesty}
\label{subsec:uncond}

The guarantee of \Cref{thm:mechanism} can be sharpened further whenever there is positive probability that at least one observer is unconditionally honest: 

\begin{lemma}
[Any probability of observer unconditional honesty yields unique truthful equilibrium]
\label{lem:honest}
If there is a positive probability that any one observer is unconditionally honest, then the truthful equilibrium is the only equilibrium of mechanism $M$.
\end{lemma}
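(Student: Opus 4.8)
The plan is to leverage the unconditionally honest observer as a probabilistic ``ground truth'' anchor that breaks the ties on which any non-truthful equilibrium of $M$ must rely. Recall that an unconditionally honest observer plays the signal-truthful strategy of \Cref{def:sigtruth} regardless of its incentives, so in particular it reports its true received signal, the true marginal $\pi_j$, the matching $\hat{x}_j$ (unique by source identifiability, \Cref{def:rel}), and the true posterior $\hat{q}_j$. Since the mechanism pairs each observer $i$ with a uniformly random peer $j$, and since there is positive probability that some observer is unconditionally honest, every strategic observer is paired with an honest peer with strictly positive probability. The key structural fact I would exploit is that, against an honest peer, the peer-prediction rewards in step 5 of \Cref{m} reduce to proper-scoring-rule comparisons against the true signal distribution, which have a unique optimum.

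Concretely, I would fix an arbitrary equilibrium and an arbitrary strategic observer $i$, and decompose $i$'s expected reward into the (positive-probability) event that its peer is honest and its complement. Conditioned on the honest-peer event, the prediction term $\log(\hat{q}_i(\hat{y}_j)/\pi_j)$ is a logarithmic score of $\hat{q}_i$ against the true signal $y_j$ drawn from the true conditional law, so by strict properness and \Cref{ass:cond} it is uniquely maximized, in expectation, by reporting the true posterior; the indicator $\bone\{\hat{x}_1=\dots=\hat{x}_n\}$ is uniquely maximized by matching the honest peer's truthful $\hat{x}_j=x$; and the absolute-value penalty term, which couples $i$'s reports $\hat{y}_i$ and $\pi_i$ with the honest peer's truthful $\hat{q}_j$ and $\pi_j$, is minimized exactly at the truthful values. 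Conditioned on the complementary event, truthful reporting remains a best response by the same computation as in the proof of \Cref{thm:mechanism} and therefore never does strictly worse. Combining, the honest-peer event contributes a strict gain for truthfulness while the remaining event contributes at most a tie, so truthful signal reporting is the \emph{unique} best response of observer $i$ in any equilibrium.

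With every observer thus forced to report truthfully (true signals, posteriors, and $\pi_i$'s based on whatever $\hat{\calD}$ the source submits), I would close the argument on the source side exactly as in \Cref{thm:mechanism}: strict properness of the scoring rule $P$ makes $\hat{\calD}=\calD$ the unique best response, and then source identifiability together with the indicator $\bone\{\hat{x}=\hat{x}_1=\dots=\hat{x}_n\}$---with at least one honest observer reporting the $\hat{x}_j$ determined by the true $\calD$---forces $\hat{x}=x$. Hence the signal-truthful profile is the only equilibrium. I expect the main obstacle to be the second step: verifying that \emph{every} component of observer $i$'s report ($\hat{y}_i$, $\pi_i$, $\hat{x}_i$, and $\hat{q}_i$ simultaneously) is strictly pinned down against the honest peer despite the coupling introduced by the absolute-value penalty term, and checking that an individual observer's incentive to match the honest peer's $\hat{x}_j$ remains decisive even though the agreement indicator depends jointly on all observers' reports.
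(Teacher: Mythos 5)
Your proposal is correct and follows essentially the same route as the paper's proof: decompose each observer's expected payoff over the positive-probability event of being paired with the unconditionally honest peer (where strictness of the truthful best response yields a strict loss for any deviation) versus its complement (where the comparison from \Cref{thm:mechanism} gives at most a tie), conclude that all observers must report truthfully, and then invoke strictness again to pin down the source. Your term-by-term analysis of the step-5 reward against an honest peer is simply a more explicit unpacking of what the paper cites as ``the strictness of the truthful equilibrium in $M$.''
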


Unconditional honesty of any (random) observer in the game turns the extensive-form game into one where any (other) observer's information set cannot feasibly have an implicit guarantee around their pair's behavior given by a Bayesian Nash equilibrium; this is why they must randomize over the (non-trivial) possibility that they get paired with the unconditionally honest observer. It turns out that the mere threat of being matched up to such an observer is enough to deter non-truthful, less-informative equilibria from forming in the game. This---along with application of the implications of \Cref{thm:mechanism}---is the reason why the only feasible (unique) equilibrium is the signal-truthful one. 

\begin{proof}[Proof of \Cref{lem:honest}]
Name the probability $p_0>0$, and say observer $j$ is unconditionally honest with probability $p_0$. Then, by the strictness of the truthful equilibrium in $M$, if observer $i \ne j$ played any strategy other than the truthful one, then with probability $p_0/n$ they would obtain strictly less than the maximal payoff achieved with the truthful strategy (because they got paired with a truthful observer), and with probability $1-p_0/n$ they will obtain a payoff that (by the second part of \Cref{thm:mechanism}) is in expectation less than the truthful one. Thus, by $p_0/n>0$, an equilibrium is only possible if all observers report truthfully, therefore by strictness of the truthful equilibrium of $M$, the source will also be truthful.
The lemma follows.
\end{proof}

The intuition and formal argument for \Cref{lem:honest} make it clear that the role that (the possibility of) unconditional honesty plays here reflects a general idea, which we suspect may be useful more broadly: In peer prediction--based mechanisms, agents' reports are cross-examined against each other---and the possibility that at least some agents may be unconditionally honest means that any putative non-truthful equilibrium behavior has some risk of being identified, and punished, through cross-examination with an unconditionally honest agent (who always reports truthfully). Thus, even a small positive probability of an unconditionally honest agent helps isolate the truthful equilibrium.

We note also that the assumption that at least one observer might be unconditionally honest  is particularly natural in the context of large signal networks with many independent participants---like in the DePIN applications we examine in \Cref{sec:appl}. Indeed, with many independent observers, it becomes increasingly reasonable to assume that the each observer believes that at least one observer may not be compromised. (Moreover, in many applications, it may be possible for the signal network's organizer to directly guarantee that at least one observer is unconditionally truthful---perhaps by managing that observer themself in a way that is common knowledge.)

\subsection{Individual rationality}

In mechanism $M$, with respect to an arbitrary source, the observers can guarantee non-negative expected payoffs if they behave according to a signal-truthful equilibrium as per \Cref{thm:mechanism}.
More specifically, according to straightforward calculations from the mechanism's payoffs, we note that the expected payoff of $M$ to any observer (conditional on their signal) if all observers behave truthfully is non-negative, because it is exactly the Kullback-Leibler divergence between the posterior probability distribution of $j$ conditional on $i$'s signal and the marginal distribution of $j$'s signal according to $\calD$. This divergence is guaranteed to be non-negative.

For the source, the usual comments applicable to affine transformations of scoring rules to guarantee individual rationality hold: for example, if we choose the quadratic scoring rule, then indeed, by adding $1/2$ for a transformed scoring rule, the payoff to the source is always non-negative.

\subsection{Collusion of source with observers}
\label{subsec:collusion}

A significant concern in decentralized systems is collusion. A commonly cited reason is that collusion can be readily facilitated with smart contracts that provide the mechanism for parties to coordinate and credibly commit to prescribed behavior. In this section, we will be particularly concerned with collusion of (a subset of) observers with the source, and show that it is essentially impossible for a mechanism to be collusion-resistant and strictly truthful.

\begin{definition}
[Source-observers collusion-resistance]
Consider a (specific) subset of observers $\calC\subseteq[n]$ that collude with the source.
In our setting, we will call a mechanism \emph{$\calC$-collusion-resistant}, if and only if for any joint (coordinated) reports of the source and subset $\calC$, strict truthfulness holds for the source's value, i.e., it is a strict best-response for the source to report its true value to the mechanism. 
\end{definition}

We note that this definition is akin to a quasi-strictness definition, because it aggregates over the actions of the other colluding players in the game induced by the mechanism and the model.

\begin{lemma}
\label{lem:collusion}
Assume that $\calC\subseteq[n]$ is common knowledge to all players and the mechanism. For any model specification $L$, consider the following refining as a multi-valued function $x \rightrightarrows L_x|_\calC \defeq \{ \calD_{\overline{\calC}} \mid \calD \in L_x \}$, i.e., every distribution is a marginal of the original model specification over all observers not in the colluding set $\calC$.
Unless source identifiability holds for the model specification defined by $L|_\calC$, there is no mechanism that can be $\calC$-collusion-resistant where signal-truthfulness is a Bayesian Nash equilibrium (in the same sense as in \Cref{thm:mechanism}).
\end{lemma}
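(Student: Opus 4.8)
The plan is to adapt the proof of \Cref{thm:impossibility} to the colluding setting, letting the observers in $\calC$ play the role that leaves the honest observers' marginal as the only informative signal. First I would invoke the failure of source identifiability for $L|_\calC$: there exist $x_1 \ne x_2$ together with distributions $\calD^{(1)} \in L_{x_1}$ and $\calD^{(2)} \in L_{x_2}$ whose marginals on the non-colluding observers coincide, say $\calD^{(1)}_{\overline{\calC}} = \calD^{(2)}_{\overline{\calC}} =: \calE$. Assume for contradiction a mechanism $M$ that is $\calC$-collusion-resistant and admits signal-truthfulness as a (well-defined) Bayesian Nash equilibrium, so that the honest observers $\overline{\calC}$ have a fixed equilibrium strategy responding only to their own signals and to the source's declaration $(\hat{x},\hat{\calD})$.

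The key device is to pin the colluding observers to a \emph{signal-independent} coordinated strategy $\tau_\calC$: every $i\in\calC$ discards its realized signal and emits reports drawn from a fixed law $\calF$ on its report space, using the coalition's coordination device but never conditioning on $\vec{y}_\calC$. I would then consider two scenarios for the source's private value, $x_1$ (with the source realizing $\calD^{(1)}$) and $x_2$ (realizing $\calD^{(2)}$); in both, the honest observers receive signals $\vec{y}_{\overline{\calC}} \sim \calE$. The crucial observation is that, holding $\tau_\calC$ fixed, the \emph{entire} view presented to the mechanism---the report $(\hat{x},\hat{\calD})$, the honest observers' equilibrium responses (a function of $\vec{y}_{\overline{\calC}}\sim\calE$ and $(\hat{x},\hat{\calD})$), and the colluding reports (from the fixed $\calF$)---is identically distributed across the two scenarios, since the only true-value-dependent ingredient, $\vec{y}_{\overline{\calC}}$, has law $\calE$ in both. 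Consequently the source's interim expected payoff $g(\hat{x},\hat{\calD})$ under $\tau_\calC$ is one and the same function of its report, whether its value is $x_1$ or $x_2$.

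From here the contradiction is immediate, exactly as in \Cref{thm:impossibility}. Collusion-resistance requires that, against the fixed colluding strategy $\tau_\calC$, it be a strict best response for the source to report its true value. At value $x_1$ this yields $\max_{\hat{\calD}} g(x_1,\hat{\calD}) > \max_{\hat{\calD}} g(x_2,\hat{\calD})$, while at value $x_2$ it yields the reverse inequality $\max_{\hat{\calD}} g(x_2,\hat{\calD}) > \max_{\hat{\calD}} g(x_1,\hat{\calD})$; since $g$ is the same function in both scenarios, these two strict inequalities are contradictory. (If one prefers to mirror the displayed chain of \Cref{thm:impossibility} rather than argue with value functions, one can instead pit the single deviation in which an $x_2$-source reports $(x_1,\calD^{(1)})$ against its truthful $x_2$-report, and symmetrically for the $x_1$-source, using the equality of $g$ across scenarios to close the loop.)

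I expect the main obstacle to be precisely the step that the device above is designed to finesse: one cannot in general have the colluding observers \emph{reproduce} the correlated joint law $\calD^{(1)}$ on $(\vec{y}_{\overline{\calC}},\vec{y}_\calC)$ when the realized world is $\calD^{(2)}$, because they observe only $\vec{y}_\calC$ and not $\vec{y}_{\overline{\calC}}$, and $\calD^{(1)}$ need not be a garbling of $\calD^{(2)}$. The resolution---and the conceptual heart of the proof---is that matching the joint is unnecessary: forcing the colluding reports to be independent of the realized signals severs their correlation with the honest reports in \emph{both} scenarios at once, which is the collusion analogue of the ``the conditioned random variable is independent of the conditioning random variable'' step in \Cref{thm:impossibility}. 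A secondary point to handle with care is that the honest observers' equilibrium strategy must be treated as fixed and as a function only of $(\vec{y}_{\overline{\calC}},\hat{x},\hat{\calD})$, which is exactly what makes their response identically distributed across the two scenarios; this is where the standing assumption that signal-truthfulness is a (well-defined) equilibrium is used.
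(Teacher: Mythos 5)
Your argument is correct and matches the paper's own proof of the impossibility direction: the paper likewise reduces to the framework of \Cref{thm:impossibility} by observing that the coalition's reports can be dictated independently of $\vec{y}_\calC$, so the source's expected payoff ranges only over $\vec{y}_{\overline{\calC}} \sim \calD_{\overline{\calC}} \in L_x|_\calC$, and non-identifiability of $L|_\calC$ then produces the same chain of contradictory strict inequalities---your signal-independent strategy $\tau_\calC$ simply makes this step explicit and correctly identifies why matching the joint law $\calD^{(1)}$ is unnecessary. The only difference is that the paper's proof also sketches a converse (instantiating mechanism $M$ on the non-colluding observers $\overline{\calC}$ when $L|_\calC$ is identifiable), which the lemma's literal ``unless'' statement does not require and you reasonably omit.
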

\begin{proof}
Forward: Construct an instantiation of mechanism $M$ (\Cref{m}), where the mechanism only operates over the subset $\overline{\calC}$ of all observers that do not collude; the mechanism otherwise ignores (does not request) input from observers in $\calC$. By source identifiability on $L|_\calC$ and \Cref{thm:mechanism}, the desired properties hold.

Reverse: In the framework of our impossibility proof in \Cref{thm:impossibility}, by the coordination of source and observers' actions, effectively the actions/reports of players in $\calC$ are dictated by the source. Therefore, the source's expected payoff ranges only over $\vec{y}_{\overline{\calC}} \sim \calD_{\overline{\calC}}$ for some $\calD_{\overline{\calC}}\in L_x|_\calC$.
Now, as in \Cref{thm:impossibility}, by the source's non-identifiability in $L|_\calC$, consider two different $x_1 \ne x_2$ for which the conditional distribution can be chosen by the source to be the same, i.e., it holds that $\calD_{\overline{\calC}}\in L_{x_1}|_\calC \cap L_{x_2}|_\calC \ne \emptyset$.
Following the rest of the expected payoffs gives rise to a similar contradiction like in \Cref{thm:impossibility}.
\end{proof}

In the context of decentralized physical infrastructure networks (DePIN), whereby participation to a mechanism on the blockchain is generally permissionless and unconstrained, i.e., new players are free to join the mechanism, one special case of such collusion of a source with a subset of observers is when these "observers" are the source itself.
This is referred to as \emph{self-dealing} in the context of DePIN, and our \Cref{lem:collusion} above essentially proves that it is impossible to handle, at least in a prior-free mechanism.
Thus, we formally prove that self-dealing \emph{must} be handled out of mechanism, either via restrictions to permissionless entry, further refined trust assumptions, or both.

\section{Applications}\label{sec:appl}

\subsection{Location signal networks}
\label{subsec:main_loc}

Continuing the discussion of location verification we began in \Cref{subsec:setting_loc}, we have that the mechanism designer wants to estimate the source's location, and use the observers' information gathering to properly incentivize them to conclude the actual source's location.

More specifically, $x\in\R^d$ is a vector of a Euclidean space, and each observer's location is fixed as $p_i\in\R^d$. The model specification consists of (possibly enlarged) distances to the source so long as these are plausibly feasible by some other $x'\in\calX$ in the model, and is given in \Cref{ass:euclid}.\footnote{Everybody knows that the source is somewhere on $\calX$ by the common knowledge property. It should not be possible for the source to enlarge their distances such that they claim some $x'\ne x$ in order for it to be identifiable, but we include it in the fully general model specification.}
\begin{env}[Location signal network]
\label{ass:euclid}
The source's location is a point  $x\in\calX\subset\R^d$. Observers are represented by points $p_i\in\R^d$, and $L_x = \{ \{(y_1, y_2, \dots, y_n)\} \mid y_i \ge \dist(p_i, x)\ \forall i \text{ and } \exists x'\in\calX : \forall i : y_i = \dist(p_i, x')\}$, i.e., every distribution that belongs to $L_x$ is just a point mass, and feasible reports of the source include all individual values greater than its (minimum) distance to observer $p_i$ that are consistent with some feasible $x'\in\calX$.
\end{env}

As a matter of fact, the definition of this model specification means that the source can claim \emph{any} potentially enlarged distances to observers, not just the plausibly feasible ones. This is because, according to the model description in \Cref{sec:model}, the full set of potential source choices to be revealed to observers, i.e., $L_x \cup \left\{\hat{\calD} \mid \forall x\in\calX: \hat{\calD} \not\in L_x \right\}$, includes the full set of strategic choices $\{ \{(y_1, y_2, \dots, y_n)\} \mid y_i \ge \dist(p_i, x)\ \forall i\}$.\footnote{It includes many other possible lies of the source as well, but the particular ones of enlarged distances are of interest, as described in \Cref{sec:intro,subsec:setting_loc}.}
Therefore, the source may report any individual values that are larger than the actual distance; of course, by the guarantees of \Cref{thm:mechanism}, they will only be strictly worse off if they do choose to do so and observers are signal-truthful, if the source is identifiable according to \Cref{def:rel}.

We remark that common alternative noisy models also fall into our framework, e.g.,
\[
L_x = \left\{ 
\left\{
\begin{aligned}
  &(y_1 + \eps_1, \dots, y_n + \eps_n) \text{ w.p. } 1/2, \\
  &(y_1 - \eps_1, \dots, y_n - \eps_n) \text{ w.p. } 1/2
\end{aligned}
\right\}
\;\middle|\; y_i \ge \dist(p_i, x)\ \forall i
\text{ and }
\exists x'\in\calX : \forall i : y_i = \dist(p_i, x')
\right\}
\,.
\]
Similar noisy models can represent observers which ping the source and are well-suited to participate in our mechanism, since they can readily provide posterior distributions by virtue of noise estimates from such links they have with the source.

\Cref{prop:loc} gives a sufficient and roughly necessary condition that characterizes the truthful elicitation of the source's location: a mechanism with a strictly truthful Bayesian Nash equilibrium can be given if and only if the source is guaranteed to lie inside the convex hull of the observers.
Note that for the necessity, we have to exclude trivially distinguishable cases, such as $\calX$ being just two points outside the convex hull on opposite sides of it.
To overcome these, since such trivial cases do not add value to the characterization, we require (to prove that the source is not identifiable in these cases) that a non-measure zero (in $\R^d$) mass outside of the convex hull is included in $\calX$.

In practice, the condition of \Cref{prop:loc} is very actionable: it indicates that one should think about where the source $x$ might be (in $\R^d$), and make sure to "surround" it on the perimeter with sensors.

We note that in this setting, the source's reward attains a particularly satisfying format: any scoring rule $P(\hat{\calD}, \hat{\vec{y}})$ rewards consistency at the signal-truthful equilibrium; either the vectors obtained by the observers (which according to \Cref{prop:loc} cannot be manipulated) match exactly the claimed ones by the source (which may be arbitrary, since they don't need to conform to any guidelines according to the model specification) in which case this component of the source's reward is maximized, or the source does not obtain the maximum reward.

In what follows, we denote by $\conv(\{p_1,\dots,p_n\})$ the convex hull defined by the points $\{p_1,\dots,p_n\}$.
We move on with two helpful lemmas about Euclidean spaces, whose proofs we include in \Cref{app:loc} (\Cref{lem:loc_first} concerns the injectivity of exact distances on any domain that is a subset of the convex hull, and \Cref{lem:loc_sec} is about their distance vectors being coordinate-wise incomparable) that will be used to prove \Cref{prop:loc}.

\begin{lemma}
\label{lem:loc_first}
The map $x\mapsto (\dist(p_1,x),\dots,\dist(p_n,x))$ is injective in any domain $\calX$ that is a subset of the convex hull $\conv(\{p_1,\dots,p_n\})$.
\end{lemma}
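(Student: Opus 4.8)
The plan is to argue by contradiction: suppose $x, x' \in \calX \subseteq \conv(\{p_1,\dots,p_n\})$ satisfy $\dist(p_i,x) = \dist(p_i,x')$ for every $i \in [n]$, and show that this forces $x = x'$. The geometric picture driving the argument is that the locus of points equidistant from $x$ and $x'$ is exactly the perpendicular-bisector hyperplane of the segment joining them; the hypothesis places every $p_i$ on this hyperplane, and hence—by convexity of a hyperplane—places the entire convex hull, and in particular $x$ and $x'$ themselves, on it. But no point can be equidistant from itself and a distinct point, so $x$ and $x'$ cannot actually be distinct.

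To make this precise I would work with squared distances to linearize the condition. Expanding $\|p_i - x\|^2 = \|p_i - x'\|^2$ and cancelling the common $\|p_i\|^2$ term yields, for each $i$, the affine identity
\[
\langle p_i,\, x' - x\rangle = \tfrac{1}{2}\left(\|x'\|^2 - \|x\|^2\right)\,,
\]
and I write $c$ for the constant on the right-hand side. The key observation is that the left-hand side is an affine function of $p_i$, so the relation $\langle z,\, x' - x\rangle = c$ holds not only at each generator $p_i$ but at every convex combination $z \in \conv(\{p_1,\dots,p_n\})$.

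Now I invoke the hull hypothesis. Since $x, x' \in \conv(\{p_1,\dots,p_n\})$, the displayed identity applies both at $z = x$ and at $z = x'$. Subtracting these two instances gives $\langle x' - x,\, x' - x\rangle = 0$, i.e.\ $\|x' - x\|^2 = 0$, forcing $x = x'$ and establishing injectivity on $\calX$.

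There is no serious obstacle here; the argument is short once the squared-distance condition is recognized as affine in the observer positions. The one point that genuinely deserves emphasis is that the convex-hull assumption is not merely convenient but necessary: the conclusion fails for points lying outside the hull—for instance, two points that are mirror images across the affine span of the $p_i$ produce identical distance vectors—which is exactly the degenerate configuration that the surrounding discussion (and \Cref{prop:loc}) must exclude separately. Accordingly, the only step to state with care is the extension of the affine identity from the generators $p_i$ to arbitrary convex combinations, after which evaluating at $x$ and $x'$ does all of the work.
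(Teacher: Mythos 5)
Your proposal is correct and follows essentially the same argument as the paper: linearize the equal-distance condition via squared norms to get the affine identity $\langle p_i, x'-x\rangle = c$, extend it to convex combinations, evaluate at $x$ and $x'$, and subtract to conclude $\|x'-x\|^2=0$. The only difference is presentational (you phrase the extension as affinity of the identity in $z$, while the paper writes out the barycentric coordinates explicitly), and your closing remark about reflections across the affine span correctly identifies why the hull hypothesis is needed.
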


\begin{lemma}
\label{lem:loc_sec}
Consider two $x', x \in \conv(\{p_1,\dots,p_n\})$. If it holds that $\dist(p_i, x') \ge \dist(p_i, x) \;\forall i$, then $x'=x$. (The converse is trivial, since all distances are the same.)
\end{lemma}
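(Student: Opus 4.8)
The plan is to exploit the elementary but crucial fact that, although each squared distance $\|z-a\|^2$ is quadratic in $z$, the \emph{difference} of two squared distances is merely affine, because the $\|z\|^2$ terms cancel. First I would pass from distances to their squares: since every distance is nonnegative, the hypothesis $\dist(p_i,x')\ge\dist(p_i,x)$ is equivalent to $\|p_i-x'\|^2\ge\|p_i-x\|^2$ for each $i$. This converts the comparison into something to which I can apply linearity.

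Next I would introduce the function $g(z)\defeq\|z-x'\|^2-\|z-x\|^2$ and expand it; the quadratic terms drop out, leaving $g(z)=2\langle z,\,x-x'\rangle+\|x'\|^2-\|x\|^2$, which is affine in $z$. In this notation the hypothesis is simply $g(p_i)\ge 0$ for all $i$. The key step is then to feed in the convexity assumption: since $x'\in\conv(\{p_1,\dots,p_n\})$, I write $x'=\sum_i\mu_i p_i$ with $\mu_i\ge 0$ and $\sum_i\mu_i=1$, and because $g$ is affine it commutes exactly with this convex combination, giving $g(x')=\sum_i\mu_i\,g(p_i)\ge 0$. On the other hand, evaluating $g$ directly at $z=x'$ yields $g(x')=\|x'-x'\|^2-\|x'-x\|^2=-\|x'-x\|^2\le 0$. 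Comparing the two expressions forces $\|x'-x\|^2=0$, hence $x'=x$, which is the claim.

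I do not expect a genuine obstacle: the argument is short once the right function is identified. The only subtle points—and the real content—are (i) recognizing that the difference of squared distances is affine, so that the ``Jensen with equality'' property of affine functions under convex combinations applies cleanly, and (ii) choosing to expand the convex representation of $x'$ rather than of $x$, precisely so that the self-distance term $\|x'-x'\|^2$ vanishes and the desired $\|x'-x\|^2$ appears with a sign that contradicts the convex-combination bound. Note that only $x'\in\conv(\{p_1,\dots,p_n\})$ is actually used, even though the statement assumes both points lie in the hull. This lemma, together with \Cref{lem:loc_first}, will then supply exactly the coordinate-wise incomparability needed to rule out the manipulated enlarged-distance reports in the proof of \Cref{prop:loc}.
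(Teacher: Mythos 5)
Your proof is correct and is essentially the same argument as the paper's: both use only the convex representation $x'=\sum_i\lambda_i p_i$, take the $\lambda_i$-weighted sum of the nonnegative differences $\|p_i-x'\|^2-\|p_i-x\|^2$, and observe that by affineness this sum collapses to $-\|x-x'\|^2$, forcing $x'=x$. Your phrasing via the affine function $g$ is just a cleaner packaging of the identical computation.
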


\begin{proposition}
[Convex hull characterization]
\label{prop:loc}
In the model defined by \Cref{ass:euclid}, if $\calX\subseteq \conv(\{p_1,\dots,p_n\})$, then the source is identifiable.
Conversely, if $\calX$ is a superset of a non-measure zero mass of points outside $\conv(\{p_1,\dots,p_n\})$, then the source is not identifiable.
\end{proposition}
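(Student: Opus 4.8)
The plan is to prove the two implications separately, leaning on the geometric lemmas just stated. The key reformulation is that, by \Cref{ass:euclid}, a point mass at a vector $(y_1,\dots,y_n)$ lies in $L_x$ exactly when $y_i \ge \dist(p_i,x)$ for every $i$ \emph{and} some $x'\in\calX$ realizes it as an exact distance vector, $y_i=\dist(p_i,x')$ for all $i$. Hence any element common to $L_{x_1}$ and $L_{x_2}$ is the exact distance vector of some $x'\in\calX$ whose coordinates dominate both $\dist(\cdot,x_1)$ and $\dist(\cdot,x_2)$ coordinate-wise; identifiability is therefore equivalent to the nonexistence of two distinct points of $\calX$ that are dominated by a common feasible distance vector.

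For the forward (sufficiency) direction, assume $\calX\subseteq\conv(\{p_1,\dots,p_n\})$ and suppose toward a contradiction that $L_{x_1}\cap L_{x_2}\ne\emptyset$ for some $x_1\ne x_2$. By the reformulation there is $x'\in\calX$ with $\dist(p_i,x')\ge\dist(p_i,x_1)$ and $\dist(p_i,x')\ge\dist(p_i,x_2)$ for all $i$, and all three of $x',x_1,x_2$ lie in the convex hull. Applying \Cref{lem:loc_sec} once to the pair $(x',x_1)$ and once to $(x',x_2)$ forces $x'=x_1$ and $x'=x_2$, hence $x_1=x_2$, a contradiction. So the source is identifiable.

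For the converse (necessity) I must exhibit distinct $x_1,x_2\in\calX$ with $L_{x_1}\cap L_{x_2}\ne\emptyset$; by the reformulation it suffices to find $x_1\ne x_2$ in $\calX$ that are \emph{comparable}, say $\dist(p_i,x_1)\ge\dist(p_i,x_2)$ for all $i$, since then the exact distance vector of $x_1$ lies in both sets. The geometric engine is this: if $x_2\notin\conv(\{p_1,\dots,p_n\})$, strict separation of the point $x_2$ from the closed hull yields a direction $u$ with $\langle u, x_2-p_i\rangle>0$ for every $i$, so the open convex cone $C=\{v:\langle v,x_2-p_i\rangle>0\ \forall i\}$ is nonempty. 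For any $v\in C$, the identity $\dist(p_i,x_2+v)^2=\dist(p_i,x_2)^2+2\langle v,x_2-p_i\rangle+|v|^2$ has both added terms strictly positive, so $x_1:=x_2+v$ satisfies $\dist(p_i,x_1)>\dist(p_i,x_2)$ for all $i$; that is, every point of the translated cone $x_2+C$ strictly dominates $x_2$.

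The crux — and the step I expect to be the main obstacle — is guaranteeing that such a dominating $x_1$ can be chosen \emph{inside} $\calX$; this is precisely what the positive-measure hypothesis buys us, and is why isolated points outside the hull had to be excluded from the statement. I would invoke the Lebesgue density theorem to pick a density-one point $x_2$ of the set $U:=\calX\setminus\conv(\{p_1,\dots,p_n\})$, which is strictly outside the hull and has positive measure. Since $C$ is a nonempty open cone it occupies a fixed positive fraction $\alpha$ of the solid angle, so $|(x_2+C)\cap B_r|=\alpha|B_r|$ for every small ball $B_r$ about $x_2$, while $|U\cap B_r|\ge(1-\epsilon_r)|B_r|$ with $\epsilon_r\to0$. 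Choosing $r$ small enough that $\alpha>\epsilon_r$ makes $(x_2+C)\cap U\cap B_r$ of positive measure, hence nonempty, yielding a point $x_1\ne x_2$ in it. Then $x_1,x_2\in\calX$ are distinct and comparable, so the exact distance vector of $x_1$ belongs to $L_{x_1}\cap L_{x_2}$, witnessing non-identifiability and completing the proof.
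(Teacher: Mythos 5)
Your proof is correct, and while the sufficiency direction is essentially the paper's (the paper applies \Cref{lem:loc_sec} to rule out strictly enlarged distance vectors and then invokes \Cref{lem:loc_first} for injectivity of exact distances, whereas you apply \Cref{lem:loc_sec} twice to a common dominating point $x'$ --- a minor streamlining that makes \Cref{lem:loc_first} unnecessary), your converse takes a genuinely different route. The paper asserts that the positive-measure set outside the hull contains two points $x_1=-\alpha u$, $x_2=-\beta u$ collinear along the unit normal $u$ of a separating hyperplane with $\<p_i,u\>\ge 0$ for all $i$, and then verifies domination by the one-line computation $\|p_i-x_2\|^2-\|p_i-x_1\|^2=(\beta-\alpha)(\beta+\alpha+2\<p_i,u\>)>0$; the existence of such a collinear pair inside $\calX$ is stated without justification. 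You instead take a Lebesgue density point $x_2$ of $\calX\setminus\conv(\{p_1,\dots,p_n\})$, observe that strict separation makes the open cone $C=\{v:\<v,x_2-p_i\>>0\ \forall i\}$ nonempty, that every point of $x_2+C$ strictly dominates $x_2$, and that the fixed solid-angle fraction of $C$ beats the vanishing density defect, so $(x_2+C)\cap\calX$ contains a point $x_1\ne x_2$. The two computations of domination are the same in substance, but your density argument supplies the existence step the paper glosses over, at the cost of invoking the Lebesgue density theorem where the paper aims for an elementary separating-hyperplane picture; in that sense your version is the more rigorous use of the non-measure-zero hypothesis.
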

\begin{proof}
Forward: Recall that we need to show that the multi-valued function $x \rightrightarrows L_x$ is injective.
As a result of \Cref{lem:loc_sec}, any enlarged distances fall outside of the (truthful) model specification $L_x$, because they are not plausibly feasible by any other truthful $x'\in\calX\subseteq \conv(\{p_1,\dots,p_n\})$.
Therefore, $x \rightrightarrows L_x$ corresponds exactly to the map that \Cref{lem:loc_first} proves is injective, and this direction is complete.

Reverse: If $\calX$ is a superset of a non-measure zero set of points outside of $\conv(\{p_1,\dots,p_n\})$, then there are two different $x_1 \ne x_2\in\R^d$ and a separating hyperplane from the convex hull (represented by its unit normal vector $u\in\R^d$) such that $\forall i: \<p_i, u\> \ge 0$ and $x_1 = -\alpha u, x_2 = -\beta u$ for some $\alpha, \beta > 0$.
Without loss of generality, order $x_1, x_2$ such that $\beta > \alpha$.
We show that $\{(\dist(p_1, x_2), \dots, \dist(p_n, x_2))\} \in L_{x_1} \cap L_{x_2} \ne \emptyset$, therefore the source is not identifiable.
Indeed, it suffices to prove that $\forall i : \dist(p_i, x_2) > \dist(p_i, x_1)$.\footnote{Notice that here, the quantifier "for all $i$" is the non-trivial part, and why we use the co-linear vectors $x_1, x_2$ with the hyperplane's normal vector $u$.}
This is true by computation, since for any $i$:
\[
\| p_i - x_2 \|^2 - \| p_i - x_1 \|^2
=
(\beta - \alpha) (\beta + \alpha + 2\<p_i, u\>)
> 0
\,.
\qedhere
\]
\end{proof}

\subsection{Bandwidth signal networks}
\label{subsec:bandwidth}

In this setting, the mechanism designer wants to elicit the source's (ideally maximum available) bandwidth.
Observers obtain noisy and possibly throttled estimates about the source's bandwidth; the model specification is given in \Cref{ass:band}.
A primary rationale for this model specification is the observation that internet connections between two nodes might be throttled, and internet links can operate over multiple hops, therefore even though an observer might have the capacity to notice the full declared bandwidth of the source if connected through a direct peer-to-peer link, they may in fact be connected via a set of intermediate nodes that cannot support this bandwidth.
The model, then, would reasonably be expected to be unable to certify a high connection speed, if no observer can witness it.
Thus, the model specification below also bakes in the assumption that there is at least one observer capable of probabilistically observing the actual source's bandwidth.

\begin{env}[Bandwidth signal network]
\label{ass:band}
The source's bandwidth is $x\in\R^+$.
Given $x$, every observer obtains independent estimates of the source's bandwidth, coming from distributions whose support is upper bounded (or truncated) at some value that's at most $x$, i.e., $L_x = \{ \calD_1^x\times\dots\times\calD_n^x \mid 0 \le \mathrm{support}(\calD_i^x) \le x \ \forall i\}$, and $\exists \calD_1^x\times\dots\times\calD_n^x \in L_x$ such that $\exists i : x\in\mathrm{support}(\calD_i^x)$.\footnote{This is the condition we impose, because we remind that we consider discrete distributions. Otherwise, we need to impose non-zero measure in a continuous distribution, i.e., $\calD_i^x(x) > 0$.}
\end{env}

Unfortunately, most settings following \Cref{ass:band} are not source-identifiable, as \Cref{prop:band1} proves.
\begin{proposition}
\label{prop:band1}
In the model of \Cref{ass:band}, there is at least one model specification where the source is not identifiable.
\end{proposition}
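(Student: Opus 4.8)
The plan is to establish non-identifiability directly from \Cref{def:rel} by exhibiting two distinct source values whose feasible sets intersect, rather than reasoning about individual distributions. The key structural observation is that the feasible sets in \Cref{ass:band} are \emph{nested} with respect to the order on $\R^+$: a marginal supported in $[0,x_1]$ is a fortiori supported in $[0,x_2]$ whenever $x_1 \le x_2$, so truncating at a smaller bandwidth is a strictly more restrictive constraint than truncating at a larger one. This monotonicity is precisely what makes the bandwidth setting degenerate, so I would isolate it as the crux of the argument.

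Concretely, recall from the model (and its footnote) that $\calX$ is assumed to contain at least two distinct values; fix any $x_1 < x_2$ in $\calX$. I would then argue that $L_{x_1} \subseteq L_{x_2}$: take any product distribution $\calD = \calD_1^{x_1}\times\cdots\times\calD_n^{x_1} \in L_{x_1}$, so that $\mathrm{support}(\calD_i^{x_1}) \subseteq [0, x_1]$ for every $i$; since $[0,x_1]\subseteq[0,x_2]$, the very same distribution satisfies the defining support constraint for $L_{x_2}$, and hence $\calD \in L_{x_2}$. Because $L_{x_1}$ is nonempty---for instance it contains the product of point masses at $0$, or indeed any product of point masses at $x_1$---we obtain $L_{x_1}\cap L_{x_2} \supseteq L_{x_1} \ne \emptyset$. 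By \Cref{def:rel} the source is therefore not identifiable, which proves the proposition for this (and in fact for any) two-value choice of $\calX$.

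There is essentially no hard computational step here; the only thing to check carefully is that the auxiliary witnessing condition baked into \Cref{ass:band}---the existence of some feasible product distribution with $x$ in the support of some marginal---does not obstruct the nesting. This is immediate, since that condition merely forces $L_x$ to be nonempty and to contain a coordinate with a point mass at $x$, both of which continue to hold under the inclusion $L_{x_1}\subseteq L_{x_2}$. Intuitively, the nesting is exactly the statement that a high-bandwidth source can always mimic a low-bandwidth one by throttling all of its links, so that understatement of bandwidth can never be excluded; this is consistent with, and foreshadows, the later quasi-strict guarantee that only rules out \emph{overstatement}.
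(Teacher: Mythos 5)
Your proof is correct, and it takes a cleaner route than the paper's. The paper does not use the nesting property at all: it constructs a particular \emph{refined} instantiation of \Cref{ass:band} in which $L_x$ contains product distributions throttled at fixed per-observer caps $p_i$ (i.e., $\mathrm{support}(\calD_i) \le \min\{p_i, x\}$), and then exhibits a single common element of $L_{x_1} \cap L_{x_2}$ for any $x_1 > x_2 > \max_i p_i$. You instead take the maximal instantiation---$L_x$ equal to the full set of product distributions with marginals supported in $[0,x]$---and observe the stronger structural fact that $L_{x_1} \subseteq L_{x_2}$ whenever $x_1 \le x_2$, so \emph{every} pair of distinct bandwidths violates \Cref{def:rel}. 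Your version buys generality and transparency: the monotone nesting is exactly the reason only overstatement can be penalized, which cleanly foreshadows the quasi-strictness notion of \Cref{def:band_quasistrict} and \Cref{lem:band}. The paper's version buys a different kind of robustness message: non-identifiability is not an artifact of admitting \emph{all} bounded-support distributions, but persists even in narrow, realistic sub-models where throttling is constrained to fixed caps. Both are legitimate witnesses for the ``at least one model specification'' claim; your check that the auxiliary witnessing condition (some feasible $\calD$ with $x$ in the support of some marginal) does not obstruct the inclusion is the right thing to verify and is handled correctly.
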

\begin{proof}
There are many example instantiations of the generic model given by \Cref{ass:band} that do not satisfy source identifiability.

For example, consider the further refined model, where some of the included distributions (let's denote them by $\calD_i$) in the product distributions contained in $L_x$ (among others) are distributions upper bounded at some fixed value $p_i$, i.e., $\mathrm{support}(\calD_i) \le \min\{p_i, x\}$.
We can model this way the source's choice to \emph{artificially throttle} the bandwidth that it appears that it has to each of the observers; note that in most realistic regimes, this option is practically available to the source. The source  can then (strategically) choose these throttled distributions---perhaps to its detriment in a system where high bandwidth is incentivized.

Formally, for any two different $x_1, x_2$ such that $x_1 > x_2 > \max\limits_{i\in [n]}\{p_i\}$, it is clear that
\[
\{\calD_1\times\dots\times\calD_n\} \subseteq L_{x_1} \cap L_{x_2} \ne \emptyset
\,;
\]
hence, the source is not identifiable according to \Cref{def:rel}.
\end{proof}

We can now derive a modification of the given guarantees; specifically, we first relax the strictness requirement, as follows.
\begin{definition}
\label{def:band_quasistrict}
A signal-truthful strategy profile of mechanism $M$ will be called \emph{quasi-strict for the source}, if any $\hat{x} > x$ attains strictly less payoff for the source when the observers are following the specified signal-truthful strategies.
\end{definition}

The relevant lemma follows in \Cref{lem:band}.

\begin{lemma}
\label{lem:band}
For any prior on $\calX$, signal-truthfulness defined by \Cref{thm:mechanism} in mechanism $M$, where\footnote{We need to specify the strategy, because for any given $\hat{\calD}$, there is no longer a unique $\hat{x}_i$ such that $\hat{\calD} \in L_{\hat{x}_i}$, due to the source not being identifiable.} we additionally refine the strategy of every observer by reporting $\hat{x}_i = \max\limits_i\left\{\mathrm{support}(\hat{\calD}_i)\right\}$ from the received $\hat{\calD}$, in the setting defined by \Cref{ass:band}, is quasi-strict for the source, as defined by \Cref{def:band_quasistrict}.
\end{lemma}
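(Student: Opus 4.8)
The plan is to exploit the explicit form of the source's reward in \Cref{eq:prov_util}, namely $P(\hat{\calD}, \vec{\hat y}) + \bone\{\hat x = \hat x_1 = \dots = \hat x_n\}$, together with the physical constraint baked into \Cref{ass:band} that every observer signal has support bounded above by the true bandwidth $x$. Throughout I condition on the source's generating distribution, consistent with the convention for ``strictness'' in \Cref{thm:mechanism} (where strictness aggregates over the choice of $\calD \in L_x$). Concretely, the signal-truthful profile I analyze has the source generate from a \emph{witness-achieving} $\calD^\star \in L_x$, which exists by the final clause of \Cref{ass:band} (some observer's marginal places mass at $x$). Since every marginal of $\calD^\star$ has support contained in $[0,x]$, the largest value appearing across all marginals' supports is exactly $x$. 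Writing $g(\hat{\calD})$ for that largest value under a reported $\hat{\calD}$, the observers' refined report is $\hat x_i = g(\hat{\calD})$, so at the truthful profile $g(\calD^\star) = x$, the indicator fires, and the source collects $\E_{\vec y \sim \calD^\star}[P(\calD^\star, \vec y)] + 1$.

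First I would fix the observers at their signal-truthful strategies, so that each $\hat x_i = g(\hat{\calD})$ is a deterministic function of the source's reported $\hat{\calD}$, and the realized signals satisfy $\hat y_i = y_i$ with $\vec y \sim \calD^\star$. The source's payoff under an arbitrary report $(\hat x, \hat{\calD})$ is then $\E_{\vec y \sim \calD^\star}[P(\hat{\calD}, \vec y)] + \bone\{\hat x = g(\hat{\calD})\}$, and the goal is to show that every report with $\hat x > x$ yields strictly less than the benchmark $\E_{\vec y \sim \calD^\star}[P(\calD^\star, \vec y)] + 1$.

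The argument splits on whether the matching indicator fires. If $\hat x \ne g(\hat{\calD})$, the indicator contributes $0$, and by strict properness of $P$ the scoring term is at most $\E_{\vec y \sim \calD^\star}[P(\calD^\star, \vec y)]$, so the total is strictly below the benchmark. The only remaining case is $\hat x = g(\hat{\calD})$ with $\hat x > x$: here the indicator does fire, but $g(\hat{\calD}) = \hat x > x = g(\calD^\star)$ forces $\hat{\calD} \ne \calD^\star$ (they disagree on the largest support value), so strict properness gives $\E_{\vec y \sim \calD^\star}[P(\hat{\calD}, \vec y)] < \E_{\vec y \sim \calD^\star}[P(\calD^\star, \vec y)]$, and the total is again strictly below the benchmark. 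This establishes quasi-strictness per \Cref{def:band_quasistrict}.

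I expect this last case to be the main conceptual obstacle, and its resolution is the crux of the lemma: the only way to over-claim ($\hat x > x$) while still inducing the observers to agree (indicator firing) is to report a distribution $\hat{\calD}$ whose support reaches above $x$; but since the physical signals can never exceed $x$, such a $\hat{\calD}$ necessarily differs from the true generating distribution, and strict properness of the scoring rule then penalizes the over-claim strictly. I would also take care to note that under-claiming ($\hat x < x$) is \emph{not} ruled out---consistent with throttling being physically available and with \Cref{prop:band1} showing the setting is not source-identifiable---which is precisely why only the one-sided quasi-strict guarantee is attainable here rather than full strictness.
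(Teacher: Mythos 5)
Your proof is correct and follows essentially the same route as the paper's: anchor the benchmark at a witness-achieving $\calD^\star\in L_x$ (so the indicator fires and the scoring term is maximized), then use strict properness together with the fact that every distribution in $L_x$ has support bounded by $x$ to show that any $\hat{x}>x$ earns strictly less. Your explicit case split on whether the indicator fires---in particular the observation that firing it with $\hat{x}>x$ forces $\hat{\calD}\ne\calD^\star$ and hence a strict scoring-rule penalty---is the same mechanism the paper compresses into the remark that strict properness pins down $\hat{\calD}=\calD$.
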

\begin{proof}
Modifying the proof of \Cref{thm:mechanism}, for the source's strategy only, by strict properness of the scoring rule, it's still going to be that $\hat{\calD} = \calD$ for some $\calD \in L_x$ that the source chooses.
The source can attain the additional reward of $1$ from the indicator function \emph{and} with every challenger reporting $\hat{x}_i = x$ according to the signal-truthful Bayesian Nash equilibrium, by choosing $\calD$ appropriately, since by \Cref{ass:band}, $\exists \calD \defeq \calD_1^x\times\dots\times\calD_n^x \in L_x$ such that $\exists i : x\in\mathrm{support}(\calD_i^x)$.\footnote{Note that the source might also attain 1 from the indicator function if it chooses some other appropriate $\calD\in L_x$, but no such $\calD\in L_x$ will result in challengers choosing $\hat{x}_i > x$ at the signal-truthful equilibrium. Rather, challengers might all agree on $\hat{x}_i < x$.}
Thus, any $\hat{x} > x$ will give strictly lower payoff to the source than $x$, because the indicator will be 0 for any $\hat{x} > x$, while at $x$, it will be 1.
\end{proof}

\section*{Acknowledgments}
The authors thank Pranav Garimidi, Guy Wuollet, and seminar audiences at a16z crypto for helpful comments.

\newpage
\printbibliography
\newpage
\appendix
\section{Proofs Omitted from the Main Text}
\label{app:loc}

\subsection{Proof of \Cref{lem:loc_first}}
\begin{proof}
Assume the contrary, i.e., that there are two $x,y\in\conv(\{p_1,\dots,p_n\})$ such that $x\ne y$ and $\forall i: \dist(p_i, x) = \dist(p_i, y)$.
Rearranging, we obtain
\[
\< x-y, p_i \> = \frac{\|x\|^2 - \|y\|^2}{2} = c
\,,
\]
which is a constant independent of $i$.
Since $x, y$ are points that belong to the convex hull, there exist $\lambda_i, \mu_i \ge 0$ such that $\sum_i \lambda_i = \sum_i \mu_i = 1$ and $x = \sum_i \lambda_i p_i,\; y = \sum_i \mu_i p_i$.
We compute
\[
\< x-y, x \> = \sum_i \lambda_i \<x-y, p_i\> = c \sum_i \lambda_i = c
\,,
\]
and similarly $\<x-y, y\> = c$. Thus, $\|x-y\|^2 = \<x-y, x-y\> = 0$, therefore $x=y$.
This is a contradiction.
\end{proof}

\subsection{Proof of \Cref{lem:loc_sec}}
\begin{proof}
By $x'\in \conv(\{p_1,\dots,p_n\})$, there exist $\lambda_i \ge 0$ such that $\sum_i \lambda_i = 1$ and $x' = \sum_i \lambda_i p_i$.
We calculate
\[
\|p_i-x'\|^2 - \|p_i-x\|^2
=
\|x-x'\|^2 - 2 \< x-p_i, x-x' \>
\,,
\]
and then by weighing and summing the square of all inequalities of the lemma's statement, we obtain that
\[
0
\le
\sum_i \lambda_i \left( \|p_i-x'\|^2 - \|p_i-x\|^2 \right)
=
\|x-x'\|^2 - 2 \left\< x-\sum_i \lambda_i p_i, x-x' \right\>
=
-\|x-x'\|^2
\,,
\]
therefore it has to be that $x'=x$, since the square norm is non-negative.
\end{proof}

\end{document}